\newcommand{\tr}{{\rm Tr}}
\newtheorem{definition}{Definition}
\newtheorem{lemma}[definition]{Lemma}
\newtheorem{theorem}[definition]{Theorem}
\def\ef{\mathbb{F}}
\def\squareforqed{\hbox{\rlap{$\sqcap$}$\sqcup$}}
\def\qed{\ifmmode\squareforqed\else{\unskip\nobreak\hfil
\penalty50\hskip1em\null\nobreak\hfil\squareforqed
\parfillskip=0pt\finalhyphendemerits=0\endgraf}\fi}
\def\endenv{\ifmmode\;\else{\unskip\nobreak\hfil
\penalty50\hskip1em\null\nobreak\hfil\;
\parfillskip=0pt\finalhyphendemerits=0\endgraf}\fi}
\newenvironment{remark}{\noindent \textbf{{Remarks~}}}{\qed}
\begin{document}
\title{All the stabilizer codes of distance $3$}

\author{Sixia Yu, Juergen Bierbrauer, Ying Dong, Qing Chen, and C.H. Oh
\thanks{Sixia Yu and Qing Chen, and C.H. Oh are with Center for quantum technologies, National University of Singapore, Singapore}
\thanks{Sixia Yu, Ying Dong, Qing Chen are with the Department of Modern Physics, University of Science and technology of China, China}
\thanks{Juergen Bierbrauer is with the Department of Mathematical Sciences, Michigan Technological University, Houghton, Michigan, USA}
\thanks{This work is supported by NSA grant H98230-10-1-0159 (USA),
National Research Foundation and Ministry of Education, Singapore (Grant No. WBS: R-710-000-008-271). The financial support from NSF Grant No.
No. 11075227 is also acknowledged.}}
\maketitle

\begin{abstract}
We give necessary and sufficient conditions for the existence of stabilizer codes $[[n,k,3]]$ of distance $3$ for qubits:
$n-k\ge \lceil\log_2(3n+1)\rceil+\epsilon_n$ where $\epsilon_n=1$ if
$n=8\frac{4^m-1}3+\{\pm1,2\}$ or $n=\frac{4^{m+2}-1}3-\{1,2,3\}$ for some integer $m\ge1$ and
$\epsilon_n=0$ otherwise. Or equivalently, a code $[[n,n-r,3]]$ exists if and only if
$n\leq (4^r-1)/3, (4^r-1)/3-n\notin\lbrace 1,2,3\rbrace$ for even $r$ and
$n\leq 8(4^{r-3}-1)/3, 8(4^{r-3}-1)/3-n\not=1$ for odd $r$.
Given an arbitrary length $n$ we present an explicit construction for an optimal quantum stabilizer code
of distance $3$ that saturates the above bound.
\end{abstract}
%\date{\today}
\begin{keywords}
quantum error correction, 1-error correcting stabilizer codes, quantum Hamming bound, optimal codes
\end{keywords}

\section{Introduction}

Quantum error-correcting codes \cite{ben,knill,shor1,ste1} provide
us an active way of protecting our precious quantum data from
quantum noise and play an essential role in various quantum
informational processes. Simply speaking, a QECC is just a subspace
that corrects certain types of errors. When the subspace is specified
by the joint $+1$ eigenspace of a group of commuting multilocal Pauli
operators, i.e., direct products of local Pauli operators, the codes
are called stabilizer codes \cite{cal1,cal2,g1}. We consider only
binary codes here. As usual we shall denote by $[[n,k,d]]$ a
stabilizer code of length $n$ and distance $d$, i.e., correcting up
to $\lfloor\frac{d-1}2\rfloor$-qubit errors, that encodes $k$
logical qubits. The redundancy $r=n-k$ counts the number of the independent generators of the stabilizer.

One fundamental task is to construct optimal codes, e.g., codes with
largest possible $k$ with fixed $n$ and $d$. In the case of $d=2$
all optimal stabilizer codes are known. In the simplest nontrivial
case $d=3$, a systematic construction for all lengths has not been
achieved yet. Known results include Gottesman's optimal codes family
\cite{g2} of lengths $2^m$ with $m\ge3$ which has been generalized
for even lengths \cite{li} by using Steane's enlargement
construction \cite{ste2} with some codes being optimal and some are
suboptimal, i.e., one logical qubit less than the quantum Hamming
bound.

A code of distance $d$ is {\em degenerate} if there are harmless
undetectable errors acting on less than $d$ qubits, i.e., errors can
not be detected but do not affect the encoded quantum data. If all
errors acting on less than $d$ qubits can be detected, the codes are
{\em non-degenerate} or {\em pure}. For a pure code of distance $3$
all errors that occurred on at most $2$ qubits can be detected. The quantum
Hamming bound (qHB), e.g.,
\begin{equation}
r\ge s_H=\lceil \log_2(3n+1)\rceil \mbox{ (equivalently }
n\leq (2^r-1)/3)
\end{equation}
for a stabilizer code $[[n,n-r,3]]$, had been proven initially for
non-degenerate codes. It is also valid for degenerate
codes of distances $3$ and $5$ \cite{g1} and of a large enough length
as shown in \cite{ash} via the linear programming (LP) bound \cite{cal2,rains1}.
Our main result reads

\begin{theorem} Let $f_m=(4^m-1)/3.$ A stabilizer code $[[n,n-r,3]]$ exists if and only if
\begin{equation}\label{bd}
r\ge s_H+\epsilon_n
\end{equation}
where $\epsilon_n=1$ if $n=8f_m+\{\pm1,2\}$ or $n=f_{m+2}-\{1,2,3\}$
for some integer $m\ge1$ and $\epsilon_n=0$ otherwise
(equivalently: $n\leq f_{r/2}, f_{r/2}-n\notin\lbrace 1,2,3\rbrace$ for even $r,$
$n\leq 8f_{(r-3)/2}, n\not=8f_{(r-3)/2}-1$ for odd $r$).
\end{theorem}

For the definition of quantum stabilizer codes see~\cite{cal1,cal2}. The translation into
the language of finite geometries is in~\cite{jb}, see also the manuscript~\cite{Glynn}.
Here the Pauli matrices are identified with the binary pairs, $\lbrace I,X,Y,Z\rbrace =\ef_2^2,$ and
an $[[n,n-r]]$ quantum code is described by a check matrix of the stabilizer.
The defining condition is that any two generators are orthogonal with respect to the
symplectic inner product.
Each of the $n$ qubits corresponds to a pair of columns of the check matrix. Each column is a
binary $r$-tuple. The nonzero tuples are identified with the points of the $(r-1)$-dimensional
binary projective space: $\ef_2^r\setminus\lbrace 0\rbrace =PG(r-1,2).$ In this setting the
stabilizer is described by a family of $n$ lines in $PG(r-1,2).$
\par
After introducing some notation and recalling known results essential
to our construction in Sec.II, we shall present a general
construction for optimal codes of arbitrary length $n> 38$ that saturates the bound Eq.(\ref{bd})
in Sec.III. In Sec.IV we shall prove the only if part by showing that the qHB cannot be attained
when $\epsilon_n=1$. In Sec. V we
shall provide explicitly some of the pure optimal codes of lengths $n< 38$,
which are essential to our general construction, using a generalization of the
code pasting method.

\section{Notations and known results}

Our construction is based on two families of pure codes and
Gottesman's stabilizer pasting \cite{g3} to build new codes from old
pure codes. As usual we denote by $X,Y,Z$ the Pauli operators and
by $I$ the identity operator. Furthermore we write
$X(n)=X_1X_2\ldots X_n$ where $X_i$ is the Pauli operator $X$
acting nontrivially on the $i$-th qubit only and use analogous expressions
for $Y(n),Z(n)$, and $I(n)$. For simplicity we shall denote by
$[n,r]$ the stabilizer of a {\em pure} stabilizer code $[[n,n-r,3]]$
while simply by $[n]$ the stabilizer an {\em optimal} pure code of
length $n$, e.g., $[5]$ stands for the perfect code $[[5,1,3]]$
whose stabilizer reads
\begin{equation}\label{5}
\begin{array}{c@{\hskip 3pt}c@{\hskip 3pt}c@{\hskip 3pt}cc}\hline\hline
 X&X&X&X&I\cr
 Z&Z&Z&Z&I\cr
 X&Y&Z&I&X\cr
 Y&Z&X&I&Z\cr
\hline\hline%\multicolumn{5}{c}{[5]=[[5,1,3]]}
\end{array},
\end{equation}
where a juxtaposition of some Pauli operators in the same row means
their direct product.

{\bf Codes family $[2^m]$ $(m\ge3)$}. The first family of codes is
the Gottesman family of optimal codes  $[[2^m,2^m-m-2,3]]$ with $m\ge
3$ that saturate the quantum Hamming bound \cite{g2}. In the geometric setting
this is equivalent to the observation that the points in $PG(r-1,2)$ outside
a subspace $PG(r-3,2)$ can be partitioned into lines. In Ref.\cite{jb} this is
referred to as the Blokhuis-Brouwer construction~\cite{BE}.
By construction, these codes are non-degenerate and two observables
$X(2^m)=X_1\ldots X_{2^m}$ and $Z({2^m})=Z_1\ldots Z_{2^m}$ are
generators of the stabilizer. For simplicity we denote by $[2^m]$ a
set of $m+2$ generators of the stabilizer of Gottesman's code,
the first two generators being $X({2^m})$ and $Z({2^m})$.

An explicit construction of the remaining $m$ generators is given
by the check matrix $[H_m|A_mH_m]$ where
$H_m=[c_0,c_1,\ldots,c_{2^m-1}]$ with the $(k+1)$-th column $c_k$
being the binary vector representing integer $k$
$(k=0,1,\ldots,2^m-1)$ and $A_m$ is any invertible  and fixed point
free $m\times m$ matrix, i.e., $A_m s\ne 0$ and $A_ms\ne s$ for all
$s\in F_2^m$. As an example, the unique code $[2^3]$ has a
stabilizer generated by
\begin{equation}
\begin{array}{c@{\hskip 3pt}c@{\hskip 3pt}c@{\hskip 3pt}c@{\hskip 3pt}c@{\hskip 3pt}c@{\hskip 3pt}c@{\hskip 3pt}c}
\hline\hline X&X&X&X&X&X&X&X\cr Z&Z&Z&Z&Z&Z&Z&Z\cr
I&Z&I&Z&Y&X&Y&X\cr I&Z&X&Y&I&Z&X&Y\cr I&Y&Z&X&Z&X&I&Y\cr
\hline\hline%\multicolumn{8}{c}{[2^3]=[[8,3,3]]}
\end{array}.
\end{equation}

{\bf Codes family $[8\cdot m]$ $(m\ge3)$}. The second  family of
codes are of parameters $[[8m, 8m-l_m -5,3]]$ with $l_m=\lceil
\log_2m\rceil$ as constructed in Ref.\cite{li}. One crucial
property of this family is that they are stabilized by all $X$
and all $Z$ observables  $X({8m})$ and $Z({{8m}})$. Here we shall
provide a different construction based on Gottesman's family.

We divide $8m$ qubits into $m$ blocks of $8$-qubit. The first five stabilizers
of the code are $[2^3]^{\otimes m}$ whose first two generators are
$X({{8m}})$ and $Z({8m})$. In the case of $m=3,4$ the codes are
defined in Table I. In the case of $m\ge 5$ so that $l_m\ge 3$, the
remaining $l_m$ generators of the stabilizer are obtained from
Gottesman's code $[2^{l_m}]$ by at first removing the first two
generators and then removing arbitrary $2^{l_m}-m$ qubits and
finally replacing each single-qubit Pauli operators $X,Y,$ and $Z$
in the remaining stabilizers by the corresponding $8$-qubit operators
$X(2^3),Y(2^3)$, and $Z(2^3)$ respectively. In Table I we also
present an  example in the case $m=6$.

Obviously all $l_m+5$ generators defined above are commuting with
each other. Because of the first $5$ generators of the stabilizer any
$2$-errors in the same $8$-qubit block can be detected. For any $2$ errors
in two different $8$-qubit blocks,  the last $l_m$ generators together
with the first two generators define a subcode of Gottesman's code
$[2^{l_m}]$ and therefore detect all $2$ errors
in different blocks. Thus all $2$-errors can be detected so that we have constructed a pure
$1$-error-correcting code of length $8m$.

We shall abuse the notation slightly to denote all the codes of this
family by $[8\cdot m]$ though some of them are not optimal. In fact
when $f_{r+1}+1\le m\le 2^{2r+1}$ and $\frac{2^{2r+1}+1}3\le m\le
2^{2r}$ with $r\ge 1,$ the code $[8\cdot m]$ is optimal since
$l_m+5=s_H$ in these cases. Otherwise the code is suboptimal, i.e.,
$l_m+5=s_H+1$.

\begin{table}[t]\label{k}
\caption{Some examples from codes family $[8\cdot m]$.}
\begin{equation*}
\begin{array}[t]{lr}
\begin{array}{c@{\hskip 3pt}c@{\hskip 3pt}c}      \hline\hline
[2^3]&[2^3]&[2^3]\\
I(2^3)&X(2^3)&Y(2^3)\\
I(2^3)&Y(2^3)&Z(2^3)\\
\hline\hline \multicolumn{3}{c}{[8\cdot 3]=[[24,17,3]]}
\end{array}&
\begin{array}{c@{\hskip 3pt}c@{\hskip 3pt}c@{\hskip 3pt}c@{\hskip 3pt}c}
\hline\hline
[2^3]&[2^3]&[2^3]&[2^3]\\
I(2^3)&X(2^3)&Y(2^3)&Z(2^3)\\
I(2^3)&Y(2^3)&Z(2^3)&X(2^3)\\
\hline\hline\multicolumn{4}{c}{[8\cdot 4]=[[32,25,3]]}
\end{array}\\ \\
\multicolumn{2}{c}{\renewcommand{\arraystretch}{1.2}
\begin{array}{c@{\hskip 3pt}c@{\hskip 3pt}c@{\hskip 3pt}c@{\hskip 3pt}c@{\hskip 3pt}c}
\hline\hline [2^3]&[2^3]&[2^3]&[2^3]&[2^3]&[2^3]\cr
I(2^3)&Z(2^3)&Y(2^3)&X(2^3)&Y(2^3)&X(2^3)\cr
X(2^3)&Y(2^3)&I(2^3)&Z(2^3)&X(2^3)&Y(2^3)\cr
Z(2^3)&X(2^3)&Z(2^3)&X(2^3)&I(2^3)&Y(2^3)\cr \hline\hline
\multicolumn{6}{c}{[8\cdot 6]=[[48,40,3]]}
\end{array}}
\end{array}
\end{equation*}
\end{table}

{\bf Stabilizer pasting} (Gottesman \cite{g3})
In the geometric setting stabilizer pasting was rediscovered
in Ref.\cite{jb} as the generalized Blokhuis-Brouwer construction.\\
Given two
non-degenerate stabilizer codes  $[n_2,s_2]=\langle
S_1,S_2,\ldots,S_{s_2}\rangle$ and $[n_1,s_1]=\langle
T_1,T_2,\ldots,T_{s_1}\rangle$ of distance $3,$ if two observables
$X({n_2})$ and $Z({n_2})$ belong to $[n_2,s_2]$, say, $S_1=X({n_2})$
and $S_2=Z({n_2})$,   then the stabilizer defined in Table II
\begin{table}[t]\label{sp}
\caption{The stabilizer for the code obtained from pasting.}
\begin{equation*}
\renewcommand{\arraystretch}{1.2}
\begin{array}{cc}
\hline\hline
X({n_2})&I({n_1})\cr
Z({n_2})&I({n_1})\cr\hline
S_{3}&T_{1}\cr
S_{4}&T_{2}\cr
\vdots&\vdots\cr
S_{s_2}&T_{s_2-2}\cr\hline
I({n_2})&T_{s_2-1}\cr
\vdots&\vdots\cr
I({n_2})&T_{s_1}\cr\hline\hline
\end{array}\quad\quad \mbox{or}\quad\quad
\begin{array}{cc}
\hline\hline
X({n_2})&I({n_1})\cr
Z({n_2})&I({n_1})\cr\hline
S_{3}&T_{1}\cr
S_{4}&T_{2}\cr
\vdots&\vdots\cr
S_{s_1+2}&T_{s_1}\cr\hline
S_{s_1+3}&I({n_1})\cr
\vdots&\vdots\cr
S_{s_2}&I({n_1})\cr\hline\hline
\end{array}
\end{equation*}
\end{table}
defines a non-degenerate stabilizer code $[n_2+n_1,s]$ with
$s=\max\{s_2,s_1+2\}$, denoted as $[n_2,s_2]\rhd[n_1,s_1]$.

As a first example of stabilizer pasting we obtain an optimal
code $[13]=[[13,7,3]]$ by pasting the optimal code $[2^3]$ of length
$n_2=8$ and $s_2=5$ stabilizers with the perfect code $[5]$, i.e,,
$n_1=5$ and $s_1=4$. The resulting code is of length $n_1+n_2=13$
with $s_1+2=6> s_2=5$ stabilizers.

If there is a third pure code $[n_3,s_3]$ with $X(n_3)$ and $Z(n_3)$
belonging to its stabilizer then the stabilizer pasting results in a
pure code
\begin{equation}
[n_1+n_2+n_3,s]=[n_3,s_3]\rhd[n_2,s_2]\rhd[n_1,s_1]
\end{equation}
with $s=\max\{s_3,s_2+2,s_1+4\}$, which can be further pasted with
another code and so on. As a second example the perfect code
$[[f_m,f_m-2m,3]]$ with $f_m=\frac{4^m-1}3$ and $m\ge 3$ can be
constructed by pasting Gottesman's codes $[2^{2l}]$
$(l=2,3,\ldots,m)$ with the pure perfect $5$-qubit code
\cite{g3,cal2},
\begin{equation}\label{p1}
[f_{m}]=[2^{2(m-1)}]\rhd[2^{2(m-2)}]\rhd\ldots\rhd[2^4]\rhd[5].
\end{equation}
As a last example the optimal stabilizer code of length $8f_m$
$(m\ge 2)$ can be constructed by pasting Gottesman's codes
$[2^{2l+1}]$ $(l=1,3,\ldots,m)$ \cite{cal2}
\begin{equation}\label{p2}
[8f_m]=[2^{2m+1}]\rhd[2^{2m-1}]\rhd\ldots\rhd[2^3].
\end{equation}

\section{The general construction}
Our main tool is the pasting of codes to produce new codes from old
ones. Only pure codes can be used in the pasting. Since the
optimal stabilizer code for $n=6$ is degenerate we see that optimality does
not imply pureness. Although from the Grassl's public code table  we
know that the optimal codes for $n\le 37$ exist, we need to check in each case
that pure optimal codes exist.

\begin{table}[t]\label{37}
\renewcommand{\arraystretch}{1.2}
\caption{The stabilizers of the pure optimal codes $[[n,n-r,3]]$ for
$n\le 38$ and $n\ne 6$. All the $2$-error-detecting blocks such as $[28,7]_2$
are constructed in Sec. V explicitly.}
$$
\begin{array}{r@{\hskip 4pt}clr@{\hskip 4pt}cl}
\hline\hline
n &r& \mbox{Stabilizer} &             n&r& \mbox{Stabilizer}\\\hline
10&6&\mbox{Table VI}&                5&4& [4,4]_1\rhd[1]_1\\
11&6&[10,6]_1\rhd[1]_1&               7&6&[6,6]_1\rhd[1]_1\\
12&6&[10,6]_2\rhd[2,4]_2&             8&5&[2^3]\\
13&6&[10,6]_2\rhd[3,4]_2&             9&6&[6,6]_2\rhd[3,4]_2\\
14&6&[10,6]_1\rhd[4,4]_1&            18&7&[10]\rhd[2^3]\\
15&6&[10]\rhd[5]&                             19&7&[18,7]_1\rhd[1]_1\\
16&6&[2^4]&                           20&7&[18,7]_2\rhd[2,4]_2\\
17&6&\mbox{Eq.(\ref{17})} &           21&6&[2^4]\rhd[5]\\
&&&     22&7&[18,7]_1\rhd[4,4]_1\\
31&7&[28,7]_2\rhd[3,4]_2&             23&7&[18,7]_2\rhd[5,5]_2\\
32&7&[2^5]&                           24&7&[8\cdot 3]\\
33&7&[28,7]_2\rhd[5,5]_2&             25&7&[18,7]_1\rhd[7,5]_1\\
34&7&[26,7]_2\rhd[7,5]_1\rhd[1]_1&    26&7&[18,7]_2\rhd[7,5]_1\rhd[1]_1\\
35&7&[28,7]_1\rhd[7,5]_1&             27&7&[18,7]_1\rhd[2^3]\rhd[1]_1\\
36&7&[28,7]_2\rhd[7,5]_1\rhd[1]_1&    28&7&[20,7]_2\rhd[7,5]_1\rhd[1]_1\\
37&7&[32]\rhd[5]&                     29&7&[8\cdot 3]\rhd [5]\\
38&7& \mbox{Eq.(\ref{38})} &          30&7&[28,7]_2\rhd[2,4]_2\\
\hline\hline
\end{array}
$$
\end{table}

\begin{lemma}
Non-degenerate optimal $1$-error correcting codes of lengths $10\le
n\le 17$ and $31\le n\le 38$ exist.
\end{lemma}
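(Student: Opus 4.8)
The plan is to prove the lemma by explicit construction, exhibiting a pure optimal code at each of the sixteen lengths. Since the optimal value of $n-k$ at every length in these two ranges is already recorded in the upper half of Table I, and since stabilizer pasting sends pure codes to pure codes, the problem reduces to realizing each $n$ as a pasting of pure building blocks whose combined parameter $s=\max\{s_2,s_1+2\}$ (or its iterated three-block version $s=\max\{s_3,s_2+2,s_1+4\}$) equals the tabulated optimal $n-k$. Organizing the argument this way, optimality becomes a matter of parameter arithmetic and non-degeneracy is automatic once the blocks themselves are pure.

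First I would dispose of the lengths handed to us directly. The lengths $n=16$ and $n=32$ are Gottesman's codes $[2^4]$ and $[2^5]$, which are pure and saturate the quantum Hamming bound, and $n=17$ is the pure code of Eq.(\ref{17}) taken from \cite{cal2}. The lengths $n=15$ and $n=37$ then follow by pasting an anchor containing the observables $X(n_2)$ and $Z(n_2)$ with the perfect code $[5]$, namely $[15]=[10]\rhd[5]$ and $[37]=[2^5]\rhd[5]$; these pastings are legal because $[2^m]$ contains $X(2^m)$ and $Z(2^m)$ (and the length-$10$ code is built to do the same), and pasting with the pure $[5]$ preserves pureness while increasing $s$ by exactly the amount needed.

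For the remaining lengths I would follow the decompositions tabulated in Table III, filling $10\le n\le 14$ and $30\le n\le 36$ by pasting one large $2$-error-detecting block (such as $[28,7]_2$, $[26,7]_2$, or $[10,6]_j$) with one or two small remainder blocks (such as $[2,4]_2$, $[3,4]_2$, $[5,5]_2$, $[7,5]_1$, $[1]_1$); a representative instance is $[30]=[28,7]_2\rhd[2,4]_2$ and a three-fold one is $[34]=[26,7]_2\rhd[7,5]_1\rhd[1]_1$. For each chain I would verify three points: that the leading block carries $X(n_2)$ and $Z(n_2)$, so the pasting is defined; that pureness is inherited at every step; and that $s=\max\{\ldots\}$ reproduces the optimal $n-k$ of Table I.

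The hard part is not this bookkeeping but the construction of the irregular blocks themselves, which I would carry out by direct search in Sec. V and then verify detect every weight-$1$ and weight-$2$ error. Two features make them genuinely new rather than routine pastings of perfect-code-type pieces. The length-$10$ case is exceptional because its optimal parameters do not even saturate the quantum Hamming bound ($n-k=6$ while $s_H=5$), so it must be exhibited directly (Table VII) rather than assembled. Likewise the leftover blocks of awkward sizes, such as $[2,4]_2$, $[3,4]_2$, $[5,5]_2$ and $[26,7]_2$, have no canonical closed form; for each I would display the generators explicitly and confirm that every single- and two-qubit Pauli error anticommutes with some generator. Granting these blocks, the tabulated pastings deliver a pure optimal code at every length in $10\le n\le 17$ and $30\le n\le 37$, which is the assertion of the lemma.
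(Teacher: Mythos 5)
Your proposal follows the paper's broad outline---handle $n=16,17,32,37$ directly, then assemble the remaining lengths from one large block plus small remainder blocks according to the table---but it misses the single idea that makes the assembly work, and as written the assembly step fails. The objects $[2,4]_2$, $[3,4]_2$, $[5,5]_2$, $[28,7]_2$, $[26,7]_2$ are \emph{not} pure codes: the subscript counts pairs of \emph{non-commuting} generators (e.g.\ $[2,4]_2=\langle XI,ZI,IX,IZ\rangle$, and no $[[2,k,3]]$ code exists at all), so ``stabilizer pasting sends pure codes to pure codes'' does not apply to them, and your three-point checklist (leading block contains $X(n_2),Z(n_2)$; pureness inherited; $s=\max\{\cdots\}$ arithmetic) never verifies that the assembled generators \emph{commute}. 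Commutativity is exactly what is at stake: pasting $[28,7]_2$ with $[2,4]_2$ row by row yields a commuting set only if the two non-commuting pairs of the first block are placed in the same rows as the two non-commuting pairs of the second, so that the anticommutations cancel in the tensor products. This matching is the paper's actual generalization of Gottesman pasting to 2ed-blocks: the pasted block has $e$ non-commuting pairs with $|e_1-e_2|\le e\le e_1+e_2$, and one obtains a genuine stabilizer code precisely when $e_1=e_2$ and the pairs are matched so that $e=0$. That is why Table IV pairs $[28,7]_2$ with $[2,4]_2$ but $[28,7]_1$ with $[7,5]_1$, and why the three-fold chains such as $[26,7]_2\rhd[7,5]_1\rhd[1]_1$ balance the noncommutativity budget across all blocks. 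Without this mechanism, ordinary pasting of genuine pure codes cannot reach these lengths, since the required small remainders (lengths $1$ through $4$) admit no distance-3 codes whatsoever.

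A second, lesser gap: you propose to obtain the large irregular blocks ``by direct search.'' The paper instead gets them by puncturing structured codes---$[28,7]_2$ and $[28,7]_1$ arise from deleting four chosen coordinates of Gottesman's $[2^5]$ (built from a specific fixed-point-free $A_5$ and column permutation $R$), and $[10,6]_{1,2}$, $[6,6]_{1,2}$ from partitions of $[2^4]$---which is what makes the construction explicit and checkable; a blind search over $7$ generators on $28$ qubits is not feasible. Relatedly, $[10]$ is not exhibited ad hoc as you suggest but is itself read off from a partition of $[2^4]$ (Table VII), and the decompositions you cite live in Table IV, not Table III.
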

\begin{proof}
An example of a pure code $[[17,11,3]]$ was found in Ref.\cite{cal2} by a random search. A
geometric construction in Ref.\cite{jb} yields the following set of generators of the stabilizers
\begin{equation}\label{17}
\begin{array}{c@{\hskip 3pt}c@{\hskip 3pt}c@{\hskip 3pt}c@{\hskip 3pt}c@{\hskip 3pt}
c@{\hskip 3pt}c@{\hskip 3pt}c@{\hskip 3pt}c@{\hskip 3pt}c@{\hskip 3pt}c@{\hskip 3pt}
c@{\hskip 3pt}c@{\hskip 3pt}c@{\hskip 3pt}c@{\hskip 3pt}c@{\hskip 3pt}c}
\hline\hline
X&X&I&  Y&X&X& Y&Y& X&X&X& I&I&I& Z&Z&Z\cr
X&Y&I&  X&Y&X& X&Z& Z&Z&Z& X&X&X& I&I&I\cr
X&Z&I&  X&X&Y& Z&X& I&I&I& Z&Z&Z& X&X&X\cr
Z&I&X&  Y&Z&Z& X&Z& Y&I&Z& Y&I&Z& Y&I&Z\cr
Z&I&Y&  Z&Y&Z& Y&X& Z&Y&I& Z&Y&I& Z&Y&I\cr
Z&I&Z&  Z&Z&Y& Z&Y& I&Z&Y& I&Z&Y& I&Z&Y\cr
\hline\hline
\multicolumn{17}{c}{[17]=[[17,11,3]]}
\end{array}.
\end{equation}
A direct application of stabilizer pasting
to two optimal codes yields an optimal pure
code $[[37,30,3]]$ whose stabilizer reads
$
[37]=[2^5]\rhd [5].
$

The following construction of an optimal pure code $[38]=[[38,31,3]]$ is translated
from the geometric construction in Ref.\cite{jb}.
We denote by $H_{32}=[H_{26},A,B]$ a $5\times 2^5$ matrix whose columns $h_i$
are all possible $5$-dimensional vector with entries $0,1$ where $A,B$ are two $5\times 3$ matrices
\begin{equation}
A=\left(
\begin{array}{ccc}
0&1&0\cr
0&0&0\cr
0&0&1\cr
0&0&0\cr
0&0&1
\end{array}\right),\quad
B=\left(
\begin{array}{ccc}
1&1&1\cr
0&1&1\cr
0&0&1\cr
1&0&1\cr
0&1&0
\end{array}
\right).
\end{equation}
Also we denote by $H^\prime_{32}=[H^\prime_{26},A^\prime,B^\prime]=E_1+MH_{32}$ which is
another $5\times 2^5$ matrix, where
\begin{equation}
E_1=\left(\begin{array}{ccccc}
1_{32}\cr
0_{32}\cr0_{32}\cr0_{32}\cr0_{32}
\end{array}\right),\quad M=\left(\begin{array}{ccccc}
0 & 1 & 0 & 1 & 1\cr
1 & 0 & 0 & 0 & 0\cr
0 & 1 & 0 & 0 & 1\cr
0 & 1 & 0 & 1 & 0\cr
0 & 0 & 1 & 0 & 1\cr
\end{array}\right).
\end{equation}
Both $M$ and $I+M$ are invertible. Furthermore we denote
\begin{equation*}
[P|Q]=\left[\begin{array}{cccccc|cccccc}
0 & 0 & 1 & 1 & 1 &1 &1 & 0 & 0 & 0 & 0 &0\cr
0 & 0 & 0 & 0 & 0 &0 &0 & 1 & 1 & 1 & 1 & 1\cr
0 & 0 & 0 & 1 & 1 &0&1 & 1 & 1 & 0 & 0 &1\cr
1 & 0 & 0 & 0 & 1 &1&0 & 1 & 0 & 1 & 1 &0\cr
0 & 1 & 1 & 0 & 0 &1&1 & 0 & 1 & 0 & 1 &0\cr
\end{array}\right].
\end{equation*}
The check matrix of the stabilizer $[38]$ reads
\begin{equation}\label{38}
%[G_x|G_z]=
\left[
\begin{array}{cccc|cccc}
1_{26}&1_{3}&1_{3}&0_6  &1_{26}&1_{3}&1_{3}&0_6          \cr
0_{26}&0_{3}&1_{3}&0_6  &1_{26}&0_{3}&1_{3}&0_6          \cr
H_{26}&A&A^\prime&P&H_{26}^\prime&B&B^\prime&Q
\end{array}\right].
\end{equation}

Optimal pure codes of lengths $16$ and $32$ exist. We shall
postpone the explicit constructions of pure optimal codes of
the remaining lengths to Sec. V where the pasting of stabilizers is
generalized to the pasting of noncommuting sets of generators.
A typical example is the construction of an optimal pure code $[36]=[[36,29,3]]$
whose stabilizer is explicitly given in Table
V. All the pure optimal codes of lengths $5\le n\le 38$ with $n\ne
6$ are summarized in Table III.
\end{proof}

Lemma 2 ensures that there exist  $[17-\beta]$ and $[38-\beta]$ for
$0\le \beta\le 7$, i.e, optimal pure codes of those lengths exist
and have $6$ and $7$ generators respectively. For
$n> 38$ we have the following general construction:

\begin{theorem} Suppose $n> 38$ and $n\ne 8^\epsilon f_m$ for any integer $m$ and $\epsilon=0,1$.
a) If $8f_{m}-1\le n\le f_{m+2}-4$ for some $m\ge 2$ then the stabilizer
\begin{equation}\label{ca}
[8\cdot (2^{2m-1}-\alpha)]\rhd [2^{2m}]\rhd [2^{2m-2}]\rhd\ldots\rhd [2^6]\rhd[17-\beta]
\end{equation}
defines an optimal pure code $[[n,n-2m-4,3]]$ where
 $f_{m+2}-4-n=8\alpha+\beta$ with $\alpha\ge 0$ and $0\le
\beta\le 7$. When $m=2$ the
stabilizer is generated by
$[8\cdot(8-\alpha)]\rhd[17-\beta]$. b) If $f_{m+2}-3\le n
\le 8f_{m+1}-2$ for some $m\ge 2$ then the stabilizer
\begin{equation}\label{cb}
[8\cdot(2^{2m}-\alpha)]\rhd [2^{2m+1}]\rhd [2^{2m-1}]\rhd\ldots \rhd [2^7]\rhd [38-\beta]
\end{equation}
defines an optimal pure code $[[n,n-2m-5,3]]$ where
$8f_{m+1}-2-n=8\alpha+\beta$ with $\alpha\ge 0$ and $0\le \beta\le 7$.  When $m=2$ the
stabilizer is generated by
$[8\cdot(16-\alpha)]\rhd[38-\beta]$.
\end{theorem}
\begin{proof} At first from Lemma 1 and the
constructions of two codes families $[8\cdot k]$ and $[2^k]$ it is
clear that all the stabilizer codes involved in Eq.(\ref{ca}) or
Eq.(\ref{cb}) are non-degenerate. Secondly by construction two
families of codes $[8\cdot k]$ and $[2^k]$ are stabilized by all $X$
and all $Z$ Pauli operators. As a result the stabilizer pasting can
be applied from right to left so that Eq.(\ref{ca}) and
Eq.(\ref{cb}) define pure stabilizer codes of distance $3.$

Now we evaluate the parameters of the codes. It is easy to see from
the definition of $\alpha$ and $\beta$ and the identity
$f_{m+2}=2^{2m+2}+2^{2m}+\ldots+2^4+5$ that the length of the
resulting codes are exactly $n$. Recalling that the codes $[8\cdot
k]$ and $[2^k]$ have $l_k=\lceil \log k\rceil+5$ and $k+2$
stabilizers respectively  while the codes $[17-\beta]$ and
$[38-\beta]$  have at most $6$ and $7$ stabilizers respectively. Since
$\alpha\ge 0$ we have $\lceil \log(2^{2m-a}-\alpha)\rceil\le 2m-a$
for $a=0,1$, the stabilizers in Eq.(\ref{ca}) and Eq.(\ref{cb}) have
$2m+4$ and $2m+5$ generators, respectively.
\end{proof}

As a first example when $n=81$ we have
$
[81]=[2^6]\rhd[17]
$
which is an optimal code $[[81,73,3]]$ apparently missing from the public code
table. As another example when $n=305$ we have $m=3$ and $8f_3-1=167<305<f_5-4=337$
so that construction a) applies. Also we have $\alpha=4$ and $\beta=0$ and as a consequence
$[305]=[8\cdot 28]\rhd[2^6]\rhd[17]$. As a last example
$n=371$ we have $m=3$ and $340\le n\le 677$ with the
condition of case b satisfied. In this case $677-371=8\times38+2$ so
that $\alpha=38$ and $\beta=2$ and by construction Eq.(\ref{cb}) we
have $[371]=[8\cdot 26]\rhd [2^7]\rhd [35] $.
Both codes $[[305,195,3]]$ and $[[371,360,3]]$ saturate the quantum Hamming bound.

\section{Exact bound}

In this section we shall prove the `only if' part of Theorem 1, which amounts to showing that in the case of
$\epsilon_n=1$, i.e., $n=8f_m+\{\pm1,2\}$ or $n=f_{m+2}-\{1,2,3\}$ for some $m\ge1$,
the quantum Hamming bound cannot be attained. Suppose that there is a pure code $[[n,k,3]]$
that attains the quantum Hamming bound, i.e., a code whose stabilizer has $s_H$ generators.
Let $[G_x|G_z]$ be its check matrix which is an $s_H\times 2n$ matrix satisfying $G_xG^T_z+G_zG^T_x=0$.
Because the code is supposed to be pure, the matrix $S=[G_x|G_z|G_x+G_z]$, composed of the syndromes
of all possible $1$-qubit errors, must have distinct columns. Moreover we have $SS^T=0$, meaning that
$S$ is self orthogonal. Denote by  $Y$
the $s_H\times y$ matrix composed of $y=2^{s_H}-3n-1$ $s_H$-dim column vectors that are not syndromes
of any $1$-qubit errors. Being composed all possible $s_H$-dim vectors the matrix
$H_{2^{s_H}}=[0|S|Y]$ is self orthogonal and thus $Y$ is also self orthogonal.
In other words, the matrix $Y$ is the check matrix of some classical binary self-orthogonal code
$[y,k,3]_2$ for some $k\le s_H$. On the one hand it is an elementary fact that
such self-orthogonal codes exist only for $y=7,8$ when  $y\le 10$ \cite{jb}.
On the other hand in the case of $\epsilon_n=1$ we have $y\in\{1,4,10\}$ if $n=8f_m+\{\pm1,2\}$
while $y\in\{3,6,9\}$ if $n=f_{m+2}-\{1,2,3\}$.
This contradiction proves that the qHB cannot be attained by a pure code in the case $\epsilon_n=1$.

Now suppose that the code $[[n,k,3]]$ attaining the qHB is impure. In this case some generators of the
stabilizer act nontrivially only on 1) one qubit or 2) two qubits. In case 1) by removing this generator
together with the qubit it acts on we obtain a code $[[n-1,k,3]]$ which may be pure or impure.
From the qHB for the code $[[n-1,k,3]]$, i.e., $n-k-1\ge s_H(n-1)$, and $s_H(n)=s_H(n-1)$
in the case of $\epsilon_n=1$,  the  bound Eq.(\ref{bd}) follows immediately. Therefore we can assume that case 1) does
not happen.

In case 2) there are some single-qubit errors acting on different qubits that lead to an identical syndrome.
We suppose that there is a number $v\ge1$ of such degenerated syndromes with each syndrome caused by $u_i+1$ single-qubit errors (acting on different qubits since case 1 does not happen) where $u_i\ge 1$ and $i=1,2,\ldots,v$.
Because the product of two single-qubit errors that lead to the same syndrome is a stabilizer of the code,
there is a set $U$ of
 generators of the stabilizer that act nontrivially exactly on two qubits and obviously $|U|=\sum_{i=1}^vu_i:=u\le n-k$.
According to Ref.\cite{jb} (Theorem 3.2) it holds \begin{equation}\label{im}
 n-(u+v)\le \frac{2^{n-k-u}-v-1}3.
 \end{equation}

Here we provide an alternative proof of the above inequality which may apply also to nonadditive codes.
Let $W_i$ be the set of qubits that $u_i+1$ single-qubit errors, which lead to an identical syndrome, act on and obviously $|W_i|=u_i+1$ since different errors must act on different qubits.
Because two different degenerated syndromes
cannot be caused by single-qubit errors acting on the same qubit, we have a disjoint union
$W=\cup_{i=1}^v W_i$ with $|W|=u+v$. Let $\bar W$ denote the remaining $|\bar W|=n-u-v$ qubits that
all the generators in $U$ trivially act on. Without loss of generality, applying some local
Clifford transformations and relabeling the qubits when necessary, we can assume that
those $v$ degenerated syndromes are caused by single-qubit errors $X_i$ with $i=1,2,\ldots v$.
Define
\begin{equation}
\hat P_1=\hat P+\sum_{i=1}^vX_i\hat PX_i+\sum_{E_a,a\in \bar W}E_a \hat P E_a
\end{equation}
where $\hat P$ is the projector of the coding subspace of $[[n,k,3]]$ and the last summation is over
all possible $1$-qubit errors ($3|\bar W|$ of them) in qubits belonging to $\bar W$.
Note that each term in the definition of $\hat P_1$ is a projector and all these projectors are
orthogonal to each other.  Let $\hat Q$ be the projector of the subspace stabilized by the generators in
$U$ and obviously $\tr\hat Q=2^{n-|U|}$. Being also stabilized by
 $U$, the subspace $\hat P_1$ is a subspace $\hat Q$. As a consequence
$\tr \hat P_1\le \tr \hat Q$, i.e.,
$\big(1+v+3|\bar W|\big)\tr P\le {2^{n-u}}$,
which becomes  exactly the inequality  Eq.(\ref{im}) considering $\tr P=2^k$.

From inequality Eq.(\ref{im}) it follows that an impure code attaining the qHB must satisfy
$3n+1\le 2^{s_H-u}+3u+2v$ which will be shown in what follows to be impossible when $\epsilon_n=1$.
Suppose $s_H\ge 6.$ It follows from $\epsilon_n=1$ that $3n+1\ge 2^{s_H}+10$ and we shall prove
$2^{s_H}(1-2^{-u})> 10+3u+2v$. Indeed if $u\le 6$ we have always
$2^{s_H}(1-2^{-u})\ge 2^6-2^{6-u}> 10+5u\ge 10+3u+2v$. If $u\ge 6$ we have
$2^{s_H}(1-2^{-u})\ge 63\times 2^{s_H-6}> 10+5s_H\ge 10+2v+3u$ since $v\le u\le s_H$.
Suppose now $s_H=5$ and from $\epsilon_n=1$ it follws  $n=7,9,10$.
In case $n=7$ inequality Eq.(\ref{im}) becomes $22\le 2^{5-u}+3u+2v$. This is impossible because
for $u=1,2,3$ we have $22-2^{5-u}> 5u\ge 3u+2v$ and for $u\ge 4$ we have $22-2^{5-u}> 14+u\ge 3u+2v$
since $u+v\le 7$.
In the cases of $n=9,10$ the inequality Eq.(\ref{im}), which becomes $28,31\le 2^{5-u}+3u+2v$,
is impossible because $2^{5-u}+5u\le 26$ for $1\le u\le 5$.
If $s_H=4$ and $\epsilon_n=1$ we have $n=4$ and the corresponding code must be pure.
All these contradictions show that the qHB cannot be attained by impure code either
when $\epsilon_n=1$.

\section{Special constructions}

In this section we shall prove Lemma 1 by constructing explicitly
all the remaining optimal non-degenerate codes of lengths $n\le 38$ except for
$n=6$. Our main tool is a generalization of the pasting of
stabilizer codes to a pasting of $2$-error detecting blocks
(2ed-block) as defined below.

\begin{definition} A $2$-error detecting block $[n,s]_e$ is generated by a set of $s$ multilocal
Pauli operators acting on $n$ qubits with $e$ pairs being non-commuting
that detects up to $2$-qubit errors.
\end{definition}

\begin{table}[b]\label{2ed1}
\caption{Some examples of $2$-error-detecting blocks.}
\begin{equation*}
\begin{array}{c}
\begin{array}{c@{\hskip 3pt}c}\hline\hline X&I\\ Z&I\\ I&X\\ I&Z\\ \hline\hline\multicolumn{2}{c}{[2,4]_2}\end{array}\quad
\begin{array}{c@{\hskip 3pt}c@{\hskip 3pt}c}\hline\hline X&X&X\\ Z&Z&Z\\ X&Y&Z\\ Y&Z&X\\ \hline\hline\multicolumn{3}{c}{[3,4]_2}\end{array}\quad
\begin{array}{c@{\hskip 3pt}c@{\hskip 3pt}c@{\hskip 3pt}c}\hline\hline X&X&X&X\\ Z&Z&Z&Z\\ X&Y&Z&I\\ Y&Z&X&I\\ \hline\hline\multicolumn{4}{c}{[4,4]_1}\end{array}
\\ \\ \begin{array}{c@{\hskip 3pt}c@{\hskip 3pt}cc@{\hskip
3pt}c@{\hskip 3pt}c@{\hskip 3pt}c@{\hskip 3pt}c} \hline\hline
X&X&X\cr Z&Z&Z\cr Z&I&Z\cr Z&X&Y\cr Y&Z&X\cr \hline\hline
\multicolumn{3}{c}{[3,5]_2}
\end{array}\quad
\begin{array}{c@{\hskip 3pt}c@{\hskip 3pt}c@{\hskip 3pt}c@{\hskip 3pt}c@{\hskip 3pt}c@{\hskip 3pt}c}
\hline\hline X&X&X&X&X\cr Z&Z&Z&Z&Z\cr Y&X&Y&X&I\cr I&Z&X&Y&I\cr
Z&X&I&Y&I\cr \hline\hline \multicolumn{5}{c}{[5,5]_2}
\end{array}\quad\begin{array}{c@{\hskip 3pt}c@{\hskip 3pt}c@{\hskip 3pt}c@{\hskip 3pt}c@{\hskip 3pt}c@{\hskip 3pt}c}
\hline\hline X&X& X&X&X& X&X\cr Z&Z& Z&Z&Z& Z&Z\cr Z&I&Z&Y&X&Y&X\cr
Z&X&Y&I&Z&X&Y\cr Y&Z&X&Z&X&I&Y\cr \hline\hline
\multicolumn{7}{c}{[7,5]_1}
\end{array}
\end{array}
\end{equation*}
\end{table}

Each non-degenerate stabilizer code $[n,s]$ detects all 2-errors and
so they define 2ed-blocks $[n,s]_0$ with all the generators
commuting. By shortening a pure code we generally obtain 2ed-blocks
with some noncommuting pairs of generators. Some examples of
2ed-blocks are presented in Table IV.

{\bf 2ed-blocks pasting}: Given two 2ed-blocks $[n_2,s_2]_{e_2}$ and
$[n_1,s_1]_{e_1}$ that are generated by $\langle
S_1=X(n_2),S_2=Z(n_2),\ldots,S_{s_2}\rangle$ and $\langle
T_1,T_2\ldots,T_{s_1}\rangle$ respectively, then
$s=\max\{s_1,s_2+2\}$ generators as given in Table II is a 2-ed
block $[n_1+n_2,s]_e$ with $|e_1-e_2|\le e\le e_1+e_2$. For
convenience we shall denote by $[n_1,s_2]_{e_1}\rhd[n_2,s_1]_{e_2}$
the resulting 2ed-block.

The 2ed-block given in Table II detects up to 2-qubits errors
because firstly all the errors happening on the $n_1$-block or
$n_2$-block can be detected because  $[n_1,s_1]_{e_1}$ and
$[n_2,s_2]$ are two pure codes of distance 3 and secondly two qubits
errors happening on different blocks can be detected by the first
two generators $X({n_2})\otimes I(n_1)$ and $Z({n_2})\otimes
I(n_1)$. If two noncomuting generators are arranged in the same row
the resulting generators will become commuting. As a result $e$ can
be zero when $e_1=e_2$ and all noncommuting pairs are carefully
matched. In this case we obtain a pure 1-error-correcting stabilizer
code, since all 2-qubit errors can be detected.

From the above arguments we see that although the $1$-qubit block,
denoted as $[1]_1=\langle X,Z\rangle$,  detects only single qubit
errors, it can be regarded as a 2ed-block because there is no
2-qubit errors on a single qubit block. For example we have
$[2,4]_2=[1]_1\rhd[1]_1$. As another example the perfect code
$[[5,1,3]]$ in Eq.(\ref{5}) can be regarded as the pasting of two
2ed-blocks $[4,4]_1\rhd[1]_1$.

A 2ed-block fails to define a code because there are some pairs of
noncommuting generators. By pasting two or more 2ed-blocks these
noncommuting generators may become commuting and we thus obtain a
1-error correcting stabilizer code.  Our construction is therefore a
kind of puncturing plus pasting. By puncturing some old stabilizer
codes we obtain some 2ed-blocks that generally contain some  pairs
of noncommuting generators. By pasting with some other 2ed-blocks
and carefully matching their noncommuting pairs we are able to
produce some new stabilizer codes. To complete the constructions
given in Table III we have only to construct explicitly all the
relevant 2ed-blocks.

\begin{table*}[t]\label{ds}
\renewcommand{\arraystretch}{1.2}
\centering \caption{The stabilizer for the optimal code
$[[36,29,3]]$.}
\begin{tabular}{c@{\hskip 4pt}c@{\hskip 4pt} c@{\hskip 4pt}c c@{\hskip 4pt}c@{\hskip 4pt}
c@{\hskip 4pt}c@{\hskip 4pt}c@{\hskip 4pt}c@{\hskip 4pt}c@{\hskip
4pt}c@{\hskip 4pt}c@{\hskip 4pt} c@{\hskip 4pt}c@{\hskip
4pt}c@{\hskip 4pt}c@{\hskip 4pt}c@{\hskip 4pt}c@{\hskip
4pt}c@{\hskip 4pt}c@{\hskip 4pt} c@{\hskip 4pt}c@{\hskip
4pt}c@{\hskip 4pt}c@{\hskip 4pt}c@{\hskip 4pt}c@{\hskip
4pt}c@{\hskip 4pt} c@{\hskip 4pt}c@{\hskip 4pt}c@{\hskip
4pt}cc@{\hskip 4pt}c@{\hskip 4pt}c@{\hskip 4pt}c@{\hskip
4pt}c@{\hskip 4pt}c@{\hskip 4pt} cc}
\multicolumn{32}{c}{$[[32,25,3]]$} \cr \hline\hline
5&10&19&28&0&1&2&3&4&6&7&8&9&11&12&13&14&15&16&17&18&20&21&22&23&24&25&26&27&29&30&31\\\hline
$X$&$X$&$X$&$X$&$X$&$X$&$X$&$X$&$X$&$X$&$X$&$X$&$X$&$X$&$X$&$X$&$X$&$X$&$X$&$X$&$X$&$X$&$X$&$X$&$X$&$X$&$X$&$X$&$X$&$X$&$X$&$X  $&$I$&$I$&$I$&$I$&$I$&$I$&$I$&$I$\\
$Z$&$Z$&$Z$&$Z$&$Z$&$Z$&$Z$&$Z$&$Z$&$Z$&$Z$&$Z$&$Z$&$Z$&$Z$&$Z$&$Z$&$Z$&$Z$&$Z$&$Z$&$Z$&$Z$&$Z$&$Z$&$Z$&$Z$&$Z$&$Z$&$Z$&$Z$&$Z  $&$I$&$I$&$I$&$I$&$I$&$I$&$I$&$I$\\
$Z$&$I$&$X$&$Y$&$I$&$Z$&$I$&$Z$&$I$&$I$&$Z$&$I$&$Z$&$Z$&$I$&$Z$&$I$&$Z$&$Y$&$X$&$Y$&$Y$&$X$&$Y$&$X$&$Y$&$X$&$Y$&$X$&$X$&$Y$&$X  $&$X$&$X$&$X$&$X$&$X$&$X$&$X$&$I$\\
$Y$&$Z$&$Y$&$Z$&$I$&$Y$&$Z$&$X$&$I$&$Z$&$X$&$I$&$Y$&$X$&$I$&$Y$&$Z$&$X$&$Z$&$X$&$I$&$Z$&$X$&$I$&$Y$&$Z$&$X$&$I$&$Y$&$X$&$I$&$Y  $&$Z$&$Z$&$Z$&$Z$&$Z$&$Z$&$Z$&$I$\\
$Z$&$X$&$Z$&$X$&$I$&$Z$&$I$&$Z$&$I$&$I$&$Z$&$X$&$Y$&$Y$&$X$&$Y$&$X$&$Y$&$I$&$Z$&$I$&$I$&$Z$&$I$&$Z$&$X$&$Y$&$X$&$Y$&$Y$&$X$&$Y  $&$Y$&$Z$&$X$&$Z$&$X$&$I$&$Y$&$X$\\
$I$&$X$&$X$&$I$&$I$&$Z$&$X$&$Y$&$Z$&$Y$&$X$&$I$&$Z$&$Y$&$Z$&$I$&$Y$&$X$&$Z$&$I$&$Y$&$I$&$Z$&$X$&$Y$&$Z$&$I$&$Y$&$X$&$Z$&$X$&$Y  $&$Z$&$X$&$Y$&$I$&$Z$&$X$&$Y$&$Z$\\
$Y$&$Y$&$Y$&$Y$&$I$&$Z$&$I$&$Z$&$X$&$X$&$Y$&$Y$&$X$&$X$&$Z$&$I$&$Z$&$I$&$X$&$Y$&$X$&$I$&$Z$&$I$&$Z$&$Z$&$I$&$Z$&$I$&$X$&$Y$&$X  $&$Z$&$I$&$Z$&$Y$&$X$&$Y$&$X$&$I$\\
\hline\hline
&&&&\multicolumn{28}{c}{$[28,7]_2$}&\multicolumn{7}{c}{$[7,5]_1$}&$[1]_1$
\end{tabular}
\end{table*}

We consider the optimal code $[2^5]$ as in Table V whose stabilizer
is defined by the check matrix $[RH_5|A_5 RH_5]$ with
\begin{equation}
A_5=\left(\begin{array}{c@{\hskip 4pt}c@{\hskip 4pt}c@{\hskip 4pt}c@{\hskip 4pt}c}1&1&0&0&0\\1&1&0&1&0\\0&1&0&0&0\\0&1&1&0&1\\0&1&1&0&0\end{array}\right),\quad
R=\left(\begin{array}{c@{\hskip 4pt}c@{\hskip 4pt}c@{\hskip 4pt}c@{\hskip 4pt}c}1&0&0&0&0\\0&0&0&0&1\\0&1&0&0&0\\0&0&0&1&0\\1&1&1&0&0\end{array}\right).
\end{equation}
Obviously $A_5$ is revertible and fixed-point free and $R$ is
invertible. By removing four coordinates
$[c_5,c_{10},c_{19},c_{28}]$ from this $[2^5]$ we obtain the
2ed-block $[28,7]_2$ and by removing the first four coordinates
$[c_0,c_1,c_2,c_3]$ we obtain A 2ed-block $[28,7]_1$. By 2ed-blocks
pasting with 2ed-blocks in Table IV we obtain the pure optimal codes
of lengths $30, 31,33$ and $35$ in addition to a previously unknown
optimal code
\begin{equation}
[36]=[28,7]_2\rhd[7,5]_1\rhd[1]_1
\end{equation}
whose stabilizer is explicitly given in Table V.

From three partitions of $[2^4]$ as shown in Table VI we can obtain
a pure optimal code $[10]$ as well as the unique optimal code
$[[6,0,4]]$ of distance 4 and four different 2ed-blocks. By pasting
with the perfect 5-qubit code we obtain $[15]=[10]\rhd[5]$. Also we
obtain all the optimal pure codes of lengths from 11 to 14 as well
as an optimal pure $[7]=[6,6]_1\rhd[1]_1$. Finally the remaining
2ed-blocks appeared in  Table III are given in Table VII.

\begin{table}[h]\label{10}
\renewcommand{\arraystretch}{1.2}
\centering \caption{Three partitions of the optimal code $[2^4]$.}
\begin{equation*}
\begin{array}{c}
\begin{array}{c@{\hskip 3pt}c@{\hskip 3pt}c@{\hskip 3pt}c@{\hskip 3pt}c@{\hskip 3pt}
c@{\hskip 3pt}c@{\hskip 3pt}c@{\hskip 3pt}c@{\hskip 3pt}cc@{\hskip
3pt} c@{\hskip 3pt}c@{\hskip 3pt}c@{\hskip 3pt}c@{\hskip 3pt}c}
\multicolumn{16}{c}{[2^4]=[[16,10,3]]}\cr \hline\hline X& X&X&X&
X&X&X&X&X&X&       X&X&X& X&X&X\cr Z& Z&Z&Z& Z&Z&Z&Z&Z&Z&
Z&Z&Z& Z&Z&Z\cr I& X&Y&Z& I&I&I&X&Y&Z&       Y&X&Z& Z&Y&X\cr I&
Y&Z&X& I&I&I&Y&Z&X&       Z&Y&X& X&Z&Y\cr I& I&I&I& X&Y&Z&X&Z&Y&
X&Y&Z& X&Y&Z\cr I& I&I&I& Y&Z&X&Y&X&Z&       Y&Z&X& Y&Z&X\cr
\hline\hline
\multicolumn{10}{c}{[10]=[[10,4,3]]}&\multicolumn{6}{c}{[[6,0,4]]}
\end{array}\\ \\
\begin{array}{c@{\hskip 3pt}c@{\hskip 3pt}c@{\hskip 3pt}c@{\hskip 3pt}c@{\hskip 3pt}
c@{\hskip 3pt}c@{\hskip 3pt}c@{\hskip 3pt}c@{\hskip 3pt}cc@{\hskip
3pt} c@{\hskip 3pt}c@{\hskip 3pt}c@{\hskip 3pt}c@{\hskip 3pt}c}
\hline\hline X& X&X&X&       X&X&X& X&X&X& X&X&X&X&X&X\cr Z& Z&Z&Z&
Z&Z&Z& Z&Z&Z& Z&Z&Z&Z&Z&Z\cr I& X&Y&Z&       Y&X&Z& Z&Y&X&
I&I&I&X&Y&Z\cr I& Y&Z&X&       Z&Y&X& X&Z&Y& I&I&I&Y&Z&X\cr I&
I&I&I&       X&Y&Z& X&Y&Z& X&Y&Z&X&Z&Y\cr I& I&I&I&       Y&Z&X&
Y&Z&X& Y&Z&X&Y&X&Z\cr \hline\hline
\multicolumn{10}{c}{[10,6]_1}&\multicolumn{6}{c}{[6,6]_1}\end{array}\\ \\
\begin{array}{c@{\hskip
3pt}c@{\hskip 3pt}c@{\hskip 3pt}c@{\hskip 3pt}c@{\hskip 3pt}
c@{\hskip 3pt}c@{\hskip 3pt}c@{\hskip 3pt}c@{\hskip 3pt}cc@{\hskip
3pt} c@{\hskip 3pt}c@{\hskip 3pt}c@{\hskip 3pt}c@{\hskip 3pt}c}
\hline \hline X&X&X&X& X&X&X& X&X&X& X&X&X& X&X&X\cr Z&Z&Z&Z& Z&Z&Z&
Z&Z&Z& Z&Z&Z& Z&Z&Z\cr I&X&Y&Z& I&I&I& X&Y&Z& Y&Z&X& Z&X&Y\cr
I&Y&Z&X& I&I&I& Y&Z&X& Z&X&Y& X&Y&Z\cr I&I&I&I& X&Y&Z& X&Y&Z& X&Y&Z&
X&Y&Z\cr I&I&I&I& Y&Z&X& Y&Z&X& Y&Z&X& Y&Z&X\cr \hline\hline
\multicolumn{10}{c}{[10,6]_2}&\multicolumn{6}{c}{[6,6]_2}
\end{array}
\end{array}
\end{equation*}
\end{table}

\begin{table}[h]\label{edd} \caption{Further constructions of 2ed-blocks.}
\begin{equation*}
\renewcommand{\arraystretch}{1.3}
\begin{array}{cc}
\begin{array}{c@{\hskip 3pt}c@{\hskip 3pt}c@{\hskip 3pt}c@{\hskip 3pt}c@{\hskip 3pt}c@{\hskip 3pt}c@{\hskip 3pt}c@{\hskip 3pt}c@{\hskip 3pt}c@{\hskip 3pt}c@{\hskip 3pt}c@{\hskip 3pt}c@{\hskip 3pt}c@{\hskip 3pt}c@{\hskip 3pt}c@{\hskip 3pt}c@{\hskip 3pt}c@{\hskip 3pt}c@{\hskip 3pt}c@{\hskip 3pt}c@{\hskip 3pt}c@{\hskip 3pt}c@{\hskip 3pt}c@{\hskip 3pt}c@{\hskip 3pt}c@{\hskip 3pt}c@{\hskip 3pt}c
c@{\hskip 3pt}c@{\hskip 3pt}c@{\hskip 3pt}c@{\hskip 3pt}c@{\hskip
3pt}c@{\hskip 3pt}ccc} \hline\hline
[5,5]_2&[5,5]_2&[5,5]_2&[3,5]_2\\
I(5)&X(5)&Y(5)&Z(3)\\
I(5)&Y(5)&Z(5)&X(3)\\
\hline\hline
\multicolumn{4}{c}{[18,7]_1}
\end{array}&
\begin{array}{c@{\hskip 3pt}c@{\hskip 3pt}c@{\hskip 3pt}c@{\hskip 3pt}c@{\hskip 3pt}c@{\hskip 3pt}c@{\hskip 3pt}c@{\hskip 3pt}c@{\hskip 3pt}c@{\hskip 3pt}c@{\hskip 3pt}c@{\hskip 3pt}c@{\hskip 3pt}c@{\hskip 3pt}c@{\hskip 3pt}c@{\hskip 3pt}c@{\hskip 3pt}c@{\hskip 3pt}c@{\hskip 3pt}c@{\hskip 3pt}c@{\hskip 3pt}c@{\hskip 3pt}c@{\hskip 3pt}c@{\hskip 3pt}c@{\hskip 3pt}c@{\hskip 3pt}c@{\hskip 3pt}c
c@{\hskip 3pt}c@{\hskip 3pt}c@{\hskip 3pt}c@{\hskip 3pt}c@{\hskip
3pt}c@{\hskip 3pt}ccc} \hline\hline
[7,5]_1&[5,5]_2&[3,5]_2&[3,5]_2\\
I(7)&X(5)&Y(3)&Z(3)\\
I(7)&Y(5)&Z(3)&X(3)\\
\hline\hline\multicolumn{4}{c}{[18,7]_2}
\end{array}\\\\
\begin{array}{c@{\hskip 3pt}c@{\hskip 3pt}c@{\hskip 3pt}c@{\hskip 3pt}c@{\hskip 3pt}c@{\hskip 3pt}c@{\hskip 3pt}c@{\hskip 3pt}c@{\hskip 3pt}c@{\hskip 3pt}c@{\hskip 3pt}c@{\hskip 3pt}c@{\hskip 3pt}c@{\hskip 3pt}c@{\hskip 3pt}c@{\hskip 3pt}c@{\hskip 3pt}c@{\hskip 3pt}c@{\hskip 3pt}c@{\hskip 3pt}c@{\hskip 3pt}c@{\hskip 3pt}c@{\hskip 3pt}c@{\hskip 3pt}c@{\hskip 3pt}c@{\hskip 3pt}c@{\hskip 3pt}c
c@{\hskip 3pt}c@{\hskip 3pt}c@{\hskip 3pt}c@{\hskip 3pt}c@{\hskip
3pt}c@{\hskip 3pt}ccc} \hline\hline
[7,5]_2&[5,5]_2&[5,5]_2&[3,5]_2\\
I(7)&X(5)&Y(5)&Z(3)\\
I(7)&Y(5)&Z(5)&X(3)\\
\hline\hline\multicolumn{4}{c}{[20,7]_2}
\end{array}&
\begin{array}{c@{\hskip 3pt}c@{\hskip 3pt}c@{\hskip 3pt}c@{\hskip 3pt}c@{\hskip 3pt}c@{\hskip 3pt}c@{\hskip 3pt}c@{\hskip 3pt}c@{\hskip 3pt}c@{\hskip 3pt}c@{\hskip 3pt}c@{\hskip 3pt}c@{\hskip 3pt}c@{\hskip 3pt}c@{\hskip 3pt}c@{\hskip 3pt}c@{\hskip 3pt}c@{\hskip 3pt}c@{\hskip 3pt}c@{\hskip 3pt}c@{\hskip 3pt}c@{\hskip 3pt}c@{\hskip 3pt}c@{\hskip 3pt}c@{\hskip 3pt}c@{\hskip 3pt}c@{\hskip 3pt}c
c@{\hskip 3pt}c@{\hskip 3pt}c@{\hskip 3pt}c@{\hskip 3pt}c@{\hskip
3pt}c@{\hskip 3pt}ccc} \hline\hline
[7,5]_2&[7,5]_2&[7,5]_2&[5,5]_2\\
I(7)&X(7)&Y(7)&Z(5)\\
I(7)&Y(7)&Z(7)&X(5)\\
\hline\hline\multicolumn{4}{c}{[26,7]_2}
\end{array}
\end{array}
\end{equation*}
\end{table}

\section{Discussions}

We have described a general construction of all the optimal stabilizer
codes of distance $3$ for lengths $n> 38$ by pasting known codes and
a special construction of the optimal pure stabilizer codes of length
$5\le n\le 38$ case by case by employing a generalization of the
stabilizer pasting to noncommuting sets of stabilizers, i.e.,
2ed-blocks pasting. For three families of lengths we have worked out
analytically the linear programming bound, which is strictly
stronger than the quantum Hamming bound and ensures the optimality
of our codes for these lengths. for all lengths except $n=6$ there are pure optimal codes.

Apparently the construction given by Theorem 2 is not unique.
Firstly there are different constructions for the optimal code
$[2^m]$ \cite{cal2}. Secondly there are other constructions such as
\begin{eqnarray}\label{caa}
[8\cdot(2^{2m-1}-\alpha_{1})]\rhd [8\cdot (2^{2m-3}-\alpha_{2})]
\rhd \ldots \mbox{\hskip 20pt   }\cr \ldots\rhd [8\cdot (2^3-\alpha_{m-1})]\rhd [17-\beta]
\end{eqnarray}
or
\begin{eqnarray}
\label{cba}
[8\cdot(2^{2m}-\alpha_{1})]\rhd [8\cdot (2^{2m-2}-\alpha_{2})]
\rhd \ldots\mbox{\hskip 22pt   }\cr\ldots \rhd [8\cdot (2^4-\alpha_{m-1})]\rhd [38-\beta]
\end{eqnarray}
where $\alpha_i+3\le 2^{2(m-i+1)-1}$ or $2^{2(m-i+1)}$ respectively
and $\alpha=\sum_{i=1}^{m-1}\alpha_i.$ For different choices of
$\{\alpha_i\}$ the resulting codes may be inequivalent. This raises the problem
of the classification of the optimal codes. Finally
our approach should turn out to be useful to investigate nonbinary codes (see Ref.\cite{BE}) as well.

\begin{remark}
At time of finishing the first version of this paper the optimal codes of lengths
$n=36,37,38,81$, which have been constructed in Ref.\cite{jb}
have been missing in Grassl's code table.
\end{remark}

\end{document}